\newcommand{\GG}[1]{}
\newcommand{\vecx}{\mathbf{x}}
\newcommand{\vecX}{\mathbf{X}}
\newcommand{\vecQ}{\mathbf{Q}}
\newcommand{\vecq}{\mathbf{q}}
\newcommand{\vecA}{\mathbf{A}}
\newcommand{\vecV}{\mathbf{V}}
\newcommand{\vecv}{\mathbf{v}}
\newcommand\BibTeX{{\rmfamily B\kern-.05em \textsc{i\kern-.025em b}\kern-.08em
T\kern-.1667em\lower.7ex\hbox{E}\kern-.125emX}}
\newcommand{\bs}{\boldsymbol}
\newcommand{\E}{\mathbb{E}}
\DeclareMathOperator{\Tr}{tr}
\newcommand{\mvn}{\text{MVN}}
\newcommand{\gig}{\text{GIG}}
\newtheorem{Prop}{Proposition}
\newtheorem{Definition}{Definition}
  \title{Flexible Clustering for High-Dimensional Data via Mixtures of Joint Generalized Hyperbolic Models}
  \author{Yang Tang$^*$, Ryan P.\ Browne$^{**}$ and Paul D.\ McNicholas$^*$}
\date{\small $^{*}$Dept.\ of Mathematics \& Statistics, McMaster University, Hamilton, Ontario, Canada.\\
$^{**}$Dept.\ of Statistics and Actuarial Sciences, University of Waterloo, Ontario, Canada.}
\begin{document}
\maketitle
\begin{abstract}
\noindent A mixture of joint generalized hyperbolic distributions (MJGHD) is introduced for asymmetric clustering for high-dimensional data. The MJGHD approach takes into account the cluster-specific subspace, thereby limiting the number of parameters to estimate while also facilitating visualization of results. Identifiability is discussed, and a multi-cycle ECM algorithm is outlined for parameter estimation. The MJGHD approach is illustrated on two real data sets, where the Bayesian information criterion is used for model selection.\\[-6pt]

\noindent \textbf{Keywords}:
Clustering; discrimination; high-dimensional data.
\end{abstract}

\section{Introduction}\label{sec:Intro}
Broadly, cluster analysis is the organization of a data set into meaningful clusters (or groups). Clustering in high-dimensional spaces has received increasing attention over the past few years because data collection has become easier and faster due to technological advances. Traditional clustering algorithms take all of the dimensions of a data set into account.
However, with high-dimensional data, the presence of irrelevant and noisy features can give misleading clustering results. In addition, data may be sparse as the number of dimensions increases, which is known as the ``curse of dimensionality'' \citep{bellman57}. 
Model-based clustering is a principled statistical approach for clustering, where data are clustered using some assumed mixture modelling structure \cite[see][for recent reviews and details]{bouveyron14,mcnicholas16a,mcnicholas16b}. A finite mixture model is a convex linear combination of a finite number of component distributions. Popular clustering methods are based on the Gaussian mixture model \citep[e.g.,][]{celeux95}, which assume that each class is represented by a Gaussian probability density. A parametrization of the component covariance matrices $\bs \Sigma_1, \ldots, \bs \Sigma_G$ via eigen-decomposition has been considered \citep{banfield93, celeux95}. The parametrization of the $p \times p$ component covariance matrices via eigen-decomposition is $\bs \Sigma_g=\lambda_g \vecQ_g \vecA_g \vecQ'_g$, where $\lambda_g=|\bs \Sigma_g|^{\frac{1}{p}}$, $\vecQ_g$ is the matrix of eigenvectors of $\bs \Sigma_g$, and $\vecA_g$ is a diagonal matrix, such that $|\vecA_g|=1$, with the normalized eigenvalues of $\bs \Sigma_g$ on the diagonal in decreasing order. 
The Gaussian parsimonious clustering models (GPCM) family \citep{celeux95} contains 14 parameterizations of $\bs \Sigma_g$ that result from imposing various constraints on $\lambda_g$, $\vecQ_g$, and $\vecA_g$. 
However, the parameterization used in the GPCMs cannot solve the problem of the \emph{curse of dimensionality} \cite[see][Sec.~2.3]{mcnicholas16a}. Feature transformation is another popular method for dimension reduction which builds new variables carrying a large part of the global information. For example, \citet{tipping99a} introduce probabilistic principal component analysis (PPCA) to find the principal subspace of the data and \cite{tipping99b} use a mixture of PPCA for clustering. Mixtures of factor analyzers \citep{ghahramani97,mclachlan00} and extensions thereof \citep[e.g.,][]{mcnicholas08} assume a lower dimension latent factor space. \citet{bouveyron07} propose a high-dimensional data clustering (HDDC) approach that encompasses both approaches.

Recent model-based clustering work has focused on mixtures of non-elliptical distributions \citep[e.g.,][]{browne15,lin16}. Dimensionality reduction approaches based on non-elliptical distributions have received relatively little attention, and recent work includes \citet{morris13,morris16}, \citet{murray14a,murray14b}, \cite{tortora16}, and \citet{lin16}. Each of these methods works well with particular types of data sets. However, the generalized hyperbolic distribution (GHD) represents perhaps the most flexible among the recent series of alternatives to the Gaussian component density \citep[see][]{browne15}.
We propose a joint generalized hyperbolic distribution (JGHD), which exhibits different marginal amounts of tail-weight. Moreover, it takes into account the component-specific subspace and, therefore, limits the number of parameters to estimate. This is a novel approach, which is applicable to high, and potentially very-high, dimensional spaces and with arbitrary correlation between dimensions. A multi-cycle expectation-conditional maximization (ECM) algorithm \citep{meng93} is used for parameter estimation and the Bayesian information criterion \citep[BIC;][]{schwarz78} is used to determine the number of components and the dimensions of the subspaces. This method is a robust asymmetric clustering method for high-dimensional data --- ``asymmetric'' in the sense that the clusters can be asymmetric. Our proposed method is illustrated, and compared to some comparitor clustering methods, on two real data sets.

\section{A Mixture of JGHDs for High-Dimensional Clustering}\label{sec:Method}
\subsection{Model-Based Subspace Clustering via Gaussian Mixtures} \label{sec:Overview}
A unified approach for model-based subspace clustering is introduced by \citet{bouveyron07}. Within the Gaussian mixture model framework, this approach assumes that class conditional densities are Gaussian $\text{MVN}(\bs \mu_g, \bs \Sigma_g)$ for $g=1,\ldots,G$. Let $\bs\Gamma_g$ consist of the eigenvectors of $\bs \Sigma_g$ as columns and $\bs \Phi_g$ is the diagonal matrix of the eigenvalues. Then the component covariance matrices $\bs \Sigma_g$ can be written $\bs \Sigma_g=\bs \Gamma_g\bs \Phi_g \bs\Gamma_g'$, for $g=1,\ldots,G$, where $\bs \Phi_g$ is divided into two blocks:
\begin{equation*}
\bs \Phi_g= \begin{blockarray}{(ccc|ccc)l}
	\phi_{g1}&&0&\BAmulticolumn{3}{c}{\multirow{3}{*}{\huge$0$}}&\\
	&\ddots&&&&&\\0&& \phi_{gq_g}&&&&\\
        \cline{1-6}
        \BAmulticolumn{3}{c|}{\multirow{3}{*}{\huge$0$}}&b_g&&0&\\
        &&&&\ddots&&\\&&&0&&b_g&\\
    \end{blockarray}
\end{equation*}
with $\phi_{gj}>b_g$, $j= 1,\ldots, q_g$, and $q_g \ll p$. The component-specific subspace $\mathcal {E}_g$ is defined as the affine space rotated by the $q_g$ eigenvectors associated with the eigenvalues $\bs \phi_{g}$. An EM algorithm is used for parameter estimation \citep[see][]{bouveyron07}. 

\subsection{A Multiple Scaled Generalized Hyperbolic Distribution} \label{sec:Overview}
A mixture of multiple scaled t-distributions is developed by \citet{forbes14}. The key elements of their approach are the introduction of multiple weight parameters and a decomposition of the matrix $\bs \Sigma=\bs \Gamma \bs \Phi \bs \Gamma'$, where $\bs \Gamma$ is the matrix of eigenvectors of $\bs \Sigma$ and $\bs \Phi$ is a diagonal matrix with the corresponding eigenvalues of $\bs \Sigma$. \citet[][]{browne15} use a mixture of GHDs; the GHD is a flexible distribution, capable of handling skewness and heavy tails, and has many well known distributions as special or limiting cases. \citet{tortora14a} introduce a mixture of multiple scaled generalized hyperbolic distributions, which is a more flexible model that forms the basis of our approach. A $p$-dimensional random variable $\vecX$ from a multiple-scaled GHD can be generated via, $\vecX=\bs \Gamma \bs \mu+ \bs \Gamma \bs \Delta_{w} \bs \beta+\bs \Gamma \vecV$,
where $\vecV\sim \mvn(\bs 0,\bs \Delta_w \bs \Phi)$ and $\bs \Delta_w=\text{diag}( w_1, \ldots, w_p)$. Therefore, $\vecX|\bs \Delta_w \sim \mvn(\bs \Gamma \bs \mu+\bs \Gamma\bs \Delta_w \bs \beta, \bs \Gamma\bs \Delta_w\bs \Phi \bs \Gamma')$ and the density of $\vecX$ can be written
\begin{equation*}
\begin{split}
f_{\text{\tiny MSGH}} (\vecx|\bs \mu, &\bs \Gamma, \bs \Phi,\bs \beta, \bs \Omega, \bs \lambda) =\\
&\int_{0}^{\infty}\cdots\int_{0}^{\infty}f_p\left(\bs \Gamma' \vecX-\bs \mu-\bs \Delta_w\bs \beta|\bs 0, \bs \Delta_w \bs \Phi\right)h_{\bs W} (w_1, \ldots, w_p| \bs \Omega, \bs 1, \bs \lambda)d \bs w,
\end{split}
\end{equation*}
where $f_p\left(\bs \Gamma' \vecX-\bs \mu-\bs \Delta_w\bs \beta|\bs 0, \bs \Delta_w \bs \Phi\right)$ is the density of a $p$-variate Gaussian distribution with mean $\bs 0$ and covariance $\bs \Delta_w \bs \Phi$, and $h_{\bs W}(w_1, \ldots, w_p| \bs \Omega, \bs 1, \bs \lambda)$ is a product of unidimensional generalized inverse Gaussian (GIG) distributions:
\begin{equation*}
h_{\bs W}(w_1, \ldots, w_p| \bs \Omega, \bs 1, \bs \lambda)=\prod_{j=1}^{p}\left[\frac{w_j^{\lambda_j-1}}{2K_{\lambda_j}(\Omega_j)}\exp\left\{-\frac{\Omega_j}{2}\left(w_j+\frac{1}{w_j}\right)\right\}\right].
\end{equation*}
Note that a related approach is discussed by \cite{wraith15}.





\subsection{A Mixture of JGHDs}\label{subspace}
Applying mixtures with flexible component densities, e.g., mixture of GHDs, to high-dimensional data is an important problem. Drawing ideas from model-based subspace clustering \citep{bouveyron07}, the JGHD chooses to project $p$-dimensional $\vecX$ onto two subspaces. We assume there is a $q$-dimensional subspace that best preserves the variance of the data and is much smaller than the original space. 
A $q$-dimensional weight variable $\mathbf W$ is incorporated into the density function of the first $q$ dimensions of $[\bs \Gamma'\vecX]$, where $\bs \Gamma$ is a matrix of eigenvectors associated with the eigenvalues $\bs \Phi=(\phi_1, \phi_2, \ldots, \phi_p)$, with $\phi_1>\phi_2>\cdots>\phi_p$, and $\bs \lambda=(\lambda_1, \ldots, \lambda_q)'$ is a $q$-dimensional index parameter. In addition, outside the $q$-dimensional subspace, the noise variance is modelled by a single parameter $b$ and a univariate latent variable $A$, where $A \sim \text{GIG}(\omega_0, 1, \lambda_0)$. Therefore, the JGHD takes the form
{\small\begin{equation*}
\begin{split}
&f(\vecx|\bs \mu, \bs \beta, \bs \Gamma, \bs \phi, b, \bs \Omega, \bs \lambda, \omega_0, \lambda_0)\\&=\prod_{j=1}^q\int_0^{\infty}\rho_1([\bs \Gamma'\vecx-\bs \mu-\bs \Delta_w \bs \beta]_j|\bs 0, \phi_j w_j)h_W(w_j|\Omega_j, 1, \lambda_j) d w_j\\&
\qquad\qquad\qquad\qquad\times  \int_0^{\infty}\prod_{k=q+1}^p\rho_1([\bs \Gamma' \vecx-\bs \mu- a \bs \beta]_k|\bs 0, b  a)h_A(a|\omega_0, 1, \lambda_0)d a\\
&=\prod_{j=1}^q\left[\frac{\Omega_j+\phi_j^{-1}([\bs \Gamma' \vecx]_j-\mu_j)^2}{\Omega_j+\beta_j^2\phi_j^{-1}}\right]^{\frac{\lambda_j-\frac{1}{2}}{2}}\frac{K_{\lambda_{j}-\frac{1}{2}}\left(\sqrt{\left[\Omega_j+\beta_j^2\phi_j^{-1}\right]\left[\Omega_j+\phi_j^{-1}\left([\bs \Gamma'\vecx]_j-\mu_j\right)^2\right]}\right)}{(2\pi)^{\frac{1}{2}}\phi_j^{\frac{1}{2}}K_{\lambda_j}(\Omega_j)\exp\{-(([\bs \Gamma' \vecx]_j-\mu_j)\beta_j)/\phi_j\}}\\ &\qquad\qquad\qquad\times\left[\frac{\omega_0+b^{-1}\sum_{k=q+1}^p([\bs \Gamma'\vecx]_k-\mu_k)^2}{\omega_0+b^{-1}\sum_{k=q+1}^p\beta_k^2}\right]^{\frac{(\lambda_0-\frac{p-q}{2})}{2}}\\
&\qquad\qquad\qquad\times\frac{K_{\lambda_0-\frac{p-q}{2}}\left(\sqrt{\left[\omega_0+b^{-1}\sum_{k=q+1}^p\beta_k^2\right]\left[\omega_0+b^{-1}\sum_{k=q+1}^p([\bs \Gamma'\vecx]_k-\mu_k)^2\right]}\right)}{(2\pi)^{\frac{p-q}{2}}b^{\frac{p-q}{2}}K_{\lambda_0}(\omega_0)\exp\left\{-(1/b){\sum_{k=q+1}^{p}([\bs \Gamma' \vecx]_k-\mu_k)\beta_k}\right\}}.
\end{split}
\end{equation*}}
Therefore, $W_{j}|\vecx \sim \gig (\Omega_j+\beta_j^2\phi_j^{-1}, \Omega_j+\{[\bs \Gamma' \vecx]_j-\mu_j\}/\phi_j,\lambda_j-{1}/{2})$
and $$A|\vecx \sim \gig\left(\omega_0+b^{-1}\sum_{k=q+1}^p\beta_k^2,  \omega_0+b^{-1}\sum_{k=q+1}^p([\bs \Gamma'\vecx]_k-\mu_k)^2, \lambda_0-\frac{p-q}{2}\right).$$

We use a mixture of JGHDs (MJGHD) for model-based clustering and classification. The density of the MJGHD is given by 
$f(\vecx|\bs \Psi)=\sum_{g=1}^G\pi_g f_{\text{JGHD}}(\vecx|\bs \Gamma_g, \bs \mu_g, \bs \beta_g, \bs \phi_g, b, \bs \Omega_g, \bs \lambda_g,\omega_{0g}, \lambda_{0g} )$,
in which we assume component-specific subspaces and the dimension $q_g$ of the subspace for the $g$th component can be considered to be the number of dimensions required to describe the main features of the $g$th component. 

\subsection{Parameter Estimation} \label{sec:par_est}	
To fit the models, we adopt the multi-cycle expectation-conditional maximization (ECM) algorithm of \citet{meng93}, which is a variant of the well-known expectation-maximization (EM) algorithm \citep{dempster77}.  The ECM algorithm exploits the simpler complete-data conditional maximization and replaces a complicated M-step with several CM-steps. In our case, the missing data comprise the group memberships $z_{ig}$, where $z_{ig}=1$ if observation $i$ belongs to component $g$ and $z_{ig}=0$ otherwise. The multidimensional latent variables $\bs \Delta_{W_{g}}=\text{diag}(W_{1g}, \ldots, W_{q_gg}, A_{g}\mathbf I_{p-q_g})$, $g=1,\ldots, G$, are assumed to follow GIG distributions. Therefore, the complete-data consist of the observed $\vecx_i$ together with the $z_{ig}$ and the $\bs \Delta_{W_{ig}}$, and the complete-data log-likelihood is given by:
$l_c(\bs \Psi)=l_{1c}(\bs \pi)+l_{2c}(\bs \theta)+l_{3c}(\bs \upsilon)+l_{4c}(\bs \tau)$,
where $l_{1c}(\bs \pi)=\sum_{i=1}^n\sum_{g=1}^Gz_{ig}\log \pi_g$, $l_{2c}(\bs \theta)=\sum_{i=1}^n\sum_{g=1}^G z_{ig}\log f_{p}\left( [\bs \Gamma_{g}' \vecx_i]|\bs \mu_{g}+\bs \Delta_{w_{ig}}\bs\beta_{g}, \bs \Delta_{w_{ig}}\bs \Phi_g\right)$, $l_{3c}(\bs \upsilon)=\sum_{i=1}^n\sum_{g=1}^G z_{ig}\sum_{j=1}^{q_g}$ $\log h_W(w_{ijg}|\Omega_{jg},1, \lambda_{jg})$, $l_{4c}(\bs \tau)=\sum_{i=1}^n\sum_{g=1}^G z_{ig}\log h_A(a_{ig}|\omega_{0g},1, \lambda_{0g})$, where $\bs \pi=(\pi_1, \ldots, \pi_G)$, $f_{p}\left( [\bs \Gamma_{g}' \vecx_i]|\bs \mu_{g}+\bs \Delta_{w_{ig}}\bs\beta_{g}, \bs \Delta_{w_{ig}}\bs \Phi_g\right)$ is the density of a multivariate Gaussian distribution with mean $\bs \mu_{g}+\bs \Delta_{w_{ig}}\bs\beta_{g}$ and covariance matrix $\bs \Phi_g=\text{diag}(\phi_1, \phi_2, \ldots, \phi_{q_g}, b_g\bs I_{p-q_{g}})$;
accordingly, $\bs \theta=\{\bs \Gamma_g, \bs \mu_g, \bs \beta_g, \bs \phi_g, b_g\}_{g=1}^G$.
Also, $\bs \upsilon=\{\bs\Omega_g, \bs \lambda_g\}_{g=1}^G$ and $\bs \tau=\{\omega_{0g}, \lambda_{0g}\}_{g=1}^G$.
The multi-cycle ECM algorithm used herein has two CM-steps on each iteration and an E-step is performed before each CM-step. 
They arise from the partition $\bs\Psi=(\bs\Psi_1, \bs\Psi_2)$, where $\bs\Psi_1=(\pi_g,\bs \mu_g, \bs \beta_g, \bs \phi_g, b_g, \bs \Omega_g, \bs \lambda_g, \omega_{0g}, \lambda_{0g})$ and $\bs \Psi_2=\bs\Gamma_g$.

\textbf{The E-step.}
We compute the expected value of the complete-data log-likelihood in the E-step using the expected values of the missing data in $l_c(\bs \Psi)$. We require the following expectations:\begin{equation*}
\begin{split}
&\E\left[Z_{ig}|\vecx_i\right]=\frac{\pi_g f(\vecx_i|\bs \Psi_g)}{\sum_{h=1}^G\pi_h f(\vecx_i|\bs \Psi_h)}\equalscolon\hat{z}_{ig}, \\
&\E\left[W_{ijg}|\vecx_i, z_{ig}=1\right]=\sqrt{\frac{e_{ijg}}{d_{jg}}}\frac{K_{\lambda_{jg}+1/2}(\sqrt{e_{ijg}d_{jg}})}{K_{\lambda_{jg}-1/2}(\sqrt{e_{ijg}d_{jg}})}\equalscolon E_{1ijg},\\
&\E\left[W^2_{ijg}|\vecx_i, z_{ig}=1\right]=\frac{e_{ijg}}{d_{jg}}\frac{K_{\lambda_{jg}+3/2}(\sqrt{e_{ijg}d_{jg}})}{K_{\lambda_{jg}-1/2}(\sqrt{e_{ijg}d_{jg}})}\equalscolon E_{2ijg},\\
&\E\left[{1}/{W_{ijg}}|\vecx_i, z_{ig}=1\right]=\sqrt{\frac{d_{jg}}{e_{ijg}}}\frac{K_{\lambda_{jg}+1/2}(\sqrt{e_{ijg}d_{jg}})}{K_{\lambda_{jg}-1/2}(\sqrt{e_{ijg}d_{jg}})}-\frac{2\lambda_{jg}-1}{e_{ijg}}\equalscolon E_{3ijg},\\
&\E\left[\log W_{ijg}|\vecx_i, z_{ig}=1\right]=\log\sqrt{\frac{e_{ijg}}{d_{jg}}}+\frac{\partial}{\partial \upsilon}\log \left(K_\upsilon (\sqrt{e_{ijg}d_{jg}})\right)|_{\upsilon=\lambda_{jg}-1/2}\equalscolon E_{4ijg},\\
\end{split}
\end{equation*}
where $d_{jg}=\Omega_{jg}+\beta_{jg}^2\phi_{jg}^{-1}$ and $e_{ijg}=\Omega_{jg}+\{[\bs \Gamma_{g}' \vecx_i]_j-\mu_{jg}\}/{\phi_{jg}}$. We also require:
\begin{equation*}
\begin{split}
&\E\left[A_{ig}|\vecx_i, z_{ig}=1\right]=\sqrt{\frac{e_{0ig}}{d_{0g}}}\frac{K_{\lambda_{0g}-(p-q_g)/2+1}(\sqrt{e_{0ig}d_{0g}})}{K_{\lambda_{0g}-(p-q_g)/2}(\sqrt{e_{0ig}d_{0g}})}\equalscolon J_{1ig},\\
&\E\left[A^2_{ig}|\vecx_i, z_{ig}=1\right]=\frac{e_{0ig}}{d_{0g}}\frac{K_{\lambda_{0g}-(p-q_g)/2+2}(\sqrt{e_{0ig}d_{0g}})}{K_{\lambda_{0g}-(p-q_g)/2}(\sqrt{e_{0ig}d_{0g}})}\equalscolon J_{2ig},\\
&\E\left[{1}/{A_{ig}}|\vecx_i, z_{ig}=1\right]=\sqrt{\frac{d_{0g}}{e_{0ig}}}\frac{K_{\lambda_{0g}-(p-q_g)/2+1}(\sqrt{e_{0ig}d_{0g}})}{K_{\lambda_{0g}-(p-q_g)/2}(\sqrt{e_{0ig}d_{0g}})}-\frac{2\lambda_{0g}-(p-q_g)}{e_{0ig}}\equalscolon J_{3ig},\\
&\E\left[\log A_{ig}|\vecx_i, z_{ig}=1\right]=\log\sqrt{\frac{e_{0ig}}{d_{0g}}}+\frac{\partial}{\partial \upsilon}\log \left(K_\upsilon (\sqrt{e_{0ig}d_{0g}})\right)|_{\upsilon=\lambda_{0g}-(p-q_g)/2}\equalscolon J_{4ig},
\end{split}
\end{equation*}
where $d_{0g}= \omega_{0g}+b_{g}^{-1}\sum_{k=q_g+1}^p\beta_{kg}^2$, and $e_{0ig}=\Omega_{0g}+b_{g}^{-1}\sum_{k=q_g+1}^p([\bs \Gamma_{g}'\vecx_i]_k-\mu_{kg})^2$.
Thus we have 
$\E[\bs \Delta_{W_{ig}}]=\text{diag}(E_{1i1g}, E_{1i2g},\ldots, E_{1iq_gg},J_{1ig}\bs I_{p-q_g} )$,
$\E[\bs \Delta_{{1}/{W_{ig}}}]=\text{diag}(E_{3i1g},E_{3i2g},\ldots, E_{3iq_gg},J_{3ig} \bs I_{p-q_g})$,
$\E[\bs \Delta_{W^2_{ig}}]=\text{diag}(E_{2i1g}, E_{2i2g},\ldots,E_{2iq_gg},
 J_{2ig}\bs I_{p-q_g})$.

\textbf{CM-step 1.} 
The first CM-step on the $(t+1)$th iteration requires the calculation of $\bs \Psi_1^{(t+1)}$ as the value of $\bs \Psi_1$ that maximizes $Q(\bs\Psi|\bs\Psi^{(t)})$ with $\bs \Psi_2$ fixed at $\bs \Psi_2^{(t)}$. In particular, we obtain the update for the mixing proportions as $\hat{\pi}_g^{(t+1)}=n_g^{(t)}/n$, where $n_g=\sum_{i=1}^n \hat{z}^{(t)}_{ig}$.
The elements of the location parameter $\bs \mu_g$ and skewness parameter $\bs \beta_g$ are replaced with 
\begin{equation*}
\begin{split}
&\mu_{jg}^{(t+1)}=\frac{\sum_{i=1}^n \hat{z}_{ig}^{(t)}[\bs\Gamma_{g}'^{(t)}\vecx_i]_j \left(\frac{\sum_{i=1}^n \hat{z}_{ig}^{(t)}\E[\bs \Delta_{W_{ig}}]_j^{(t)}}{n_g^{(t)}} \E[\bs \Delta_{{1}/{W_{ig}}}]_j^{(t)}-1\right)  }{\sum_{i=1}^n\hat{z}_{ig}^{(t)}\left(\frac{\sum_{i=1}^n \hat{z}_{ig}^{(t)}\E[\bs \Delta_{W_{ig}}]_j^{(t)}}{n_g^{(t)}} \E[\bs \Delta_{{1}/{W_{ig}}}]_j^{(t)}-1\right)}
\end{split}
\end{equation*}
and
\begin{equation*}
\begin{split}
\beta_{jg}^{(t+1)}=\frac{\sum_{i=1}^n \hat{z}_{ig}^{(t)}[\bs \Gamma_{g}'^{(t)}\vecx_i]_j \left(\frac{\sum_{i=1}^n \hat{z}_{ig}^{(t)}\E[\bs \Delta_{{1}/{W_{ig}}}]_j^{(t)}}{n_g^{(t)}}- \E[\bs \Delta_{{1}/{W_{ig}}}]_j^{(t)}\right)  }{\sum_{i=1}^n\hat{z}_{ig}^{(t)}\left(\frac{\sum_{i=1}^n \hat{z}_{ig}^{(t)}\E[\bs \Delta_{W_{ig}}]_j^{(t)}}{n_g^{(t)}} \E[\bs \Delta_{{1}/{W_{ig}}}]_j^{(t)}-1\right)},
\end{split}
\end{equation*}
respectively, where $j=1,2, \ldots, p$ and $[\bs\Gamma_{g}'^{(t)}\vecx_i]_j$ is the $j${th} element of $[\bs\Gamma_{g}'^{(t)}\vecx_i]$. 
We update the diagonal elements $h_{jg}$ of the empirical covariance matrix of $[\bs\Gamma_{g}'\vecx]_j|\bs \Delta_{W_{g}}$ via
\begin{equation*}\begin{split}
h_{jg}^{(t+1)}=\frac{1}{n_g^{(t)}}\sum_{n=1}^n \Big\{\hat{z}_{ig}^{(t)}([\bs\Gamma_{g}'^{(t)}\vecx_i]_j&-\mu_{jg}^{(t+1)})^2-2 \hat{z}_{ig}^{(t)}([\bs\Gamma_{g}'^{(t)}\vecx_i]_j-\mu_{jg}^{(t+1)})\beta_{jg}^{(t+1)}\E[\bs \Delta_{W_{ig}}]_j^{(t)}\\&
\qquad\qquad\qquad\qquad\qquad\qquad+\hat z_{ig}^{(t)} (\E[\bs \Delta_{W^2_{ig}}]_j^{(t)}(\beta_{jg}^2)^{(t+1)}\Big\}.
\end{split}\end{equation*}
We then order $h_{jg}^{(t+1)}$ from the largest to the smallest in order to determine the subspaces. Now we obtain  
\begin{equation*}
\begin{split}
&\phi_{jg}^{(t+1)}=\frac{1}{n_g^{(t)}}\sum_{i=1}^n \hat{z}_{ig}^{(t)}\left[E_{3ijg}^{(t)}([\bs \Gamma_{g}'^{(t)}\vecx_i]_j-\mu^{(t+1)}_{jg})^2-2([\bs \Gamma_{g}'^{(t)}\vecx_i]_j-\mu_{jg}^{(t+1)})\beta_{jg}^{(t+1)}+E_{1ijg}^{(t)}(\beta^2_{jg})^{(t+1)}\right],\\
&b_g^{(t+1)}=\frac{1}{n_g^{(t)}(p-q_g)}\sum_{i=1}^n \hat {z}_{ig}^{(t)}\sum_{k=q_g+1}^p\left[J_{3ig}^{(t)}[\bs \Gamma_{g}'^{(t)}\vecx_i]^2_k+J_{3ig}^{(t)} (\mu_{kg}^2)^{(t+1)}+J_{1ig}^{(t)}(\beta_{kg}^2)^{(t+1)} \right.\\
& \qquad\qquad\qquad\qquad\qquad\qquad\qquad\qquad \left.-2J_{3ig}^{(t)}[\bs \Gamma_{g}'^{(t)}\vecx_i]_k\mu_{kg}^{(t+1)}-2[\bs \Gamma_{g}'^{(t)}\vecx_i]_k\beta_{kg}^{(t+1)}-2\mu_{kg}^{(t+1)}\beta_{kg}^{(t+1)}\right].
\end{split}
\end{equation*}
The $q_g$-dimensional concentration parameter  $\bs \Omega_g$ and index parameter $\bs \lambda_g$ are estimated by maximizing the function 
\begin{equation*}
q_{jg}(\Omega_{jg}, \lambda_{jg})=-\log K_{\lambda_{jg}}(\Omega_{jg})+(\lambda_{jg}-1)\frac{\sum_{i=1}^n \hat{z}_{ig}E_{4ijg}}{n_g}-\frac{\Omega_{jg}}{2}\left(\sum_{i=1}^n\hat{z}_{ig}E_{1ijg}+\sum_{i=1}^nz_{ig}E_{3ijg}\right).
\end{equation*}
This leads to
{\begin{equation*}
\lambda_{jg}^{(t+1)}=\frac{\sum_{i=1}^n \hat{z}_{ig}E_{4ijg}}{n_g}\lambda_{jg}^{(t)}\left[\frac{\partial}{\partial v}\log K_v(\Omega_{jg}^{(t)})|_{v=\lambda_{jg}^{(t)}}\right]^{-1}
\end{equation*}
and 
\begin{equation*}
\Omega_{jg}^{(t+1)}=\Omega_{jg}^{(t)}-\left[\frac{\partial}{\partial v}q_{jg}(v, \lambda_{jg}^{(t+1)})|_{v=\Omega_{jg}^{(t)}}\right]\left[\frac{\partial^2}{\partial v^2}q_{jg}(v, \lambda_{jg}^{(t+1)})|_{v=\Omega_{jg}^{(t)}}\right]^{-1}.
\end{equation*}}The univariate parameters $\omega_{0g}$ and $\lambda_{0g}$ are estimated as for the mixture of GHDs \citep[see][]{browne15}.

\textbf{CM-step 2.}
To update the component eigenvector matrices $\bs \Gamma_{g}$, our goal is to minimize the matrix trace function
\begin{equation*}\begin{split}
f(\bs \Gamma_{g})&=\frac{1}{2}\Tr\left\{\sum_{i=1}^n \hat{z}_{ig}\vecx_i \vecx_i'\bs \Gamma_{g}\hat{\bs\Phi}_g \E[\bs \Delta_{{1}/{W_{ig}}}] \bs \Gamma_{g}'\right\}\\&
\qquad\qquad\qquad\qquad\qquad-\Tr\left\{\sum_{i=1}^n\hat{z}_{ig}\bs \Phi_{g}^{-1}(\E[\bs \Delta_{{1}/{W_{ig}}}]\bs \mu_g+\bs \beta_g)\vecx_i'\bs \Gamma_{g}\right\}+\text{constant}.
\end{split}\end{equation*}
We follow \citet{kiers02} and \citet{browne14} by using a majorization function for the minimization of $f(\bs \Gamma_{g})$ and it takes the form
$f(\bs \Gamma_{g})\le\text{constant}+\Tr\left(\bs F_t \bs \Gamma_{g}\right)$,
where 
\begin{equation*}
\begin{split}
\bs F_t=&\sum_{i=1}^n\left(-\hat{z}_{ig}^{(t)}(\bs \Phi_g^{-1})^{(t+1)}(\E[\bs \Delta_{{1}/{W_{ig}}}]^{(t)}\bs \mu_g^{(t+1)}+\bs \beta_g^{(t+1)})\vecx_i'\right)\\&\qquad\qquad\qquad\qquad\qquad+\sum_{i=1}^n\left(\hat{z}_{ig}^{(t)}\vecx_i \vecx_i' \bs \Gamma_{g}\E[\bs \Delta_{{1}/{W_{ig}}}]^{(t)}(\bs\Phi_{g}^{-1})^{(t+1)}-\hat{z}_{ig}^{(t)}\alpha_{ig}^{(t+1)}\vecx_i \vecx_i' \bs \Gamma_{g}\right),
\end{split}
\end{equation*}
where $\bs \Phi_g^{(t+1)}=\text{diag}(\phi_{1g}^{(t+1)}, \phi_{2g}^{(t+1)},\ldots, \phi_{q_gg}^{(t+1)}, b_g^{(t+1)}\mathbf I_{p-d})$  and  $\alpha_{ig}$ is the largest value of the diagonal matrix $\E[\bs \Delta_{{1}/{W_{ig}}}]^{(t)}\bs (\bs \Phi_{g}^{-1})^{(t+1)}$. Suppose we obtain the singular value decomposition
$\bs -F_t=\mathbf P_t \mathbf B_t \mathbf R_t'$ in which $\mathbf P_t$ and $\mathbf R_t$ are orthonormal, and $\mathbf B_t$ is diagonal, then the update of $\bs \Gamma_{g}$ becomes $\bs \Gamma_{g}^{(t+1)}=\mathbf {R}_t\mathbf{P}_t'$.


%

\subsection {Model Identifiability}

The identifiability of our MJGHD is investigated in this section. 
The identifiability of the MJGHD depends on the identifiability of the mixture of univariate generalized hyperbolic distributions which has been proved in \citet{browne15}. In Proposition 1, we extend the results in \citet{browne15} and show that the MJGHD is identifiable assuming correct choice of $q_g$ $(g = 1,\ldots, G)$. 

\begin{Definition}\label{d1}
Let $\bs \Sigma$ be a square, symmetric real-valued $p\times p$ matrix with $p$ linearly independent eigenvectors. Then there exists a symmetric diagonal decomposition 
$\bs \Sigma=\bs\Gamma \bs \Phi \bs \Gamma'$,
where the columns of $\bs \Gamma$ are the orthogonal and normalized eigenvectors of $\bs \Sigma$, and $\bs \Phi$ is the diagonal matrix whose entries are the eigenvalues of $\bs \Sigma$. Further, all entries of $\bs\Gamma$ are real and we have $\bs\Gamma^{-1}=\bs\Gamma'$. 
\end{Definition}
\begin{Prop}\label{p1}
The JGHDs generate identifiable finite mixtures assuming the correct choice of $q_g$ $(g = 1,\ldots, G)$. 
\end{Prop}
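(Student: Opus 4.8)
The plan is to reduce the $p$-dimensional statement to the univariate identifiability result of \citet{browne15}, exploiting the fact that after the orthogonal change of variables induced by $\bs\Gamma_g$ the $g$th JGHD component has \emph{independent} coordinates: the first $q_g$ components of $\bs\Gamma_g'\vecX$ are univariate GHDs with parameters $(\mu_{jg},\beta_{jg},\phi_{jg},\Omega_{jg},\lambda_{jg})$, while the remaining $p-q_g$ form a GHD with spherical scale $b_g\ident$, skewness block $\bs\beta_g$, and index parameters $\omega_{0g},\lambda_{0g}$. Since the GHD family is closed under linear maps, the projection $\vecu'\vecX$ of the $g$th component onto a unit vector $\vecu$ is then distributed as a convolution of $q_g+1$ independent univariate GHDs whose parameters are explicit functions of $\vecu$ and of $(\bs\Gamma_g,\bs\mu_g,\bs\beta_g,\bs\phi_g,b_g,\bs\Omega_g,\bs\lambda_g,\omega_{0g},\lambda_{0g})$; correspondingly its moment generating function $s\mapsto M_g(s\vecu)$ is a product of $q_g+1$ univariate-GHD mgfs in $s$, convergent on a bounded interval about the origin whose endpoints are explicit, piecewise-smooth functions of $\vecu$ controlled by the smallest of the relevant concentration-plus-skewness quantities.

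First I would establish identifiability along a generic line. Fixing $\vecu$, the hypothesis gives $\sum_g\pi_g M_g(s\vecu)=\sum_h\pi_h'M_h'(s\vecu)$ on a common interval about $s=0$. I would then run a Teicher-type peeling argument of the kind used in \citet{browne15}: ordering the projected JGHD mgfs by the width of their domain of convergence, the component whose mgf first becomes singular along the line blows up there while all others remain finite, so it (or one with identical singular behaviour, ties broken by the order of the singularity and its power-law prefactor) must occur on both sides and can be cancelled; iterating gives $G=G'$ and, by a continuity argument, a single bijection of the components under which the projected mgfs agree for every $\vecu$. The analytic form of the univariate GHD mgf then has to be used to deduce from each such equality of convolutions the equality of the underlying factors, i.e.\ of the per-direction parameter lists; pinning down this factorization is the step I expect to need the most care, since a convolution of GHDs need not obviously determine its factors.

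Finally I would reconstruct the multivariate parameters from the matched family of one-dimensional projections (which determine the component laws by Cram\'er--Wold). Letting $\vecu$ range over the unit sphere, the subspace orthogonal to $\mathcal{E}_g$ is recovered as the set of directions along which the extra weight-bearing GHD factors disappear; by the assumed correct choice of $q_g$ and the strict ordering $\phi_{1g}>\cdots>\phi_{pg}$ this set is unique, which fixes $\bs\Gamma_g$ and $q_g$, and the scalars $\bs\mu_g,\bs\beta_g,\bs\phi_g,b_g,\bs\Omega_g,\bs\lambda_g,\omega_{0g},\lambda_{0g}$ then drop out of the recovered projected mgfs (e.g.\ by differentiation at $s=0$ along the recovered eigen-directions). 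The obstacle running through all three steps is the uniqueness of the decomposition of a JGHD --- and of a finite sum of JGHDs --- into its orthogonal frame, its subspace split, and its scalar parameters: one must rule out that such an object be mimicked by JGHDs with a different rotation or a different $q$, and this is exactly where correct $q_g$, distinctness of the $\phi_{jg}$, and the univariate identifiability of \citet{browne15} all enter.
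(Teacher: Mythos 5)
Your first reduction is fine as far as it goes: after rotating by $\bs\Gamma_g$ the first $q_g$ coordinates are independent univariate GHDs and the trailing $p-q_g$ block shares a single GIG weight, so the projection $\vecu'\vecX$ of one component is indeed a convolution of up to $q_g+1$ independent univariate GHDs. But that is exactly where the genuine gap sits. The identifiability theorem of \citet{browne15} concerns finite \emph{mixtures} (sums) of univariate GHD densities, not convolutions, so it cannot be invoked directly for the projected components; your plan therefore needs a uniqueness-of-factorization result --- that equality of two such convolutions forces equality of the univariate GHD factors --- and you explicitly acknowledge that you do not know how to establish it. Nothing in the proposal supplies that step, and it is not routine. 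Moreover, the Teicher-style peeling you describe rests on the premise that the projected mgf ``blows up'' at the endpoint of its convergence interval while the others stay finite. For a GHD with index parameter $\lambda<0$ this is false: as the argument of the Bessel function tends to $0$ one has $K_\lambda(z)\asymp z^{-|\lambda|}$, and the prefactor $[\Omega/(\Omega-2\beta s-\phi s^2)]^{\lambda/2}$ exactly cancels the divergence, so the mgf remains finite at the boundary of its domain (it merely ceases to exist beyond it). Detecting the matching component then requires analyzing a branch-point rather than a divergence, ties between components with equal endpoints but different parameters must be resolved, and all of this has to be done for products of several such factors. So the two central steps of the proposal --- the peeling and, above all, the deconvolution --- are left unproven.

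The paper avoids both difficulties by never trying to factor the per-direction convolution. It evaluates the characteristic function along a ray $\vecv=t\bs z$ and uses the large-argument asymptotic $K_\lambda(z)\sim\sqrt{\pi/(2z)}\,e^{-z}$ to show that, as $t\to\infty$, $\varphi_{\vecX}(t\bs z)$ has the form $\exp\{it\,\bs z'\bs\Gamma\bs\mu-t\,c(\bs z)-\lambda^{*}(\bs z)\log t+O(1)\}$, where $c(\bs z)=\sum_{j=1}^{q}\kappa_j\,||\bs\Gamma'\bs z|_j|+\sqrt{b\omega_0}\sum_{k=q+1}^{p}|\bs\Gamma'\bs z|_k$ and $\lambda^{*}(\bs z)=\sum_{j=1}^{q}\lambda_j I(|\bs\Gamma'\bs z|_j\neq0)+\lambda_0$. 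The convolution structure thus collapses into a single aggregate signature (oscillation drift, exponential decay rate, logarithmic power), and by \citet{yakowitz68} one can choose $\bs z$ so that these triples are pairwise distinct across the $G$ components, reducing the problem to an identifiable univariate hyperbolic-type mixture as in \citet{browne15}. If you want to rescue your route, replace the blow-up peeling and the factorization step by this kind of directional asymptotic-signature comparison; as written, the proposal's key step is missing.
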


\begin{proof}
Consider moving the amount $t$ in direction $\bs z$, setting $\vecX=t\bs z$. If $\bs z$ is equal to the $k${th} eigenvector ($k=1, \ldots q$) then the density reduces to 
\begin{equation*}
c_k\left[\frac{\Omega_k+\phi_k^{-1}([t-\mu_k)^2}{\Omega_k+\beta_k^2\phi_k^{-1}}\right]^{\frac{\lambda_k-\frac{1}{2}}{2}}\frac{K_{\lambda_{k}-\frac{1}{2}}\left(\sqrt{\left[\Omega_k+\beta_k^2\phi_k^{-1}\right]\left[\Omega_k+\phi_k^{-1}\left(t-\mu_k\right)^2\right]}\right)}{(2\pi)^{\frac{1}{2}}\phi_k^{\frac{1}{2}}K_{\lambda_k}(\Omega_k)\exp\{-{(t-\mu_k)\beta_k}/{\phi_k}\}}, 
\end{equation*}
where 
{\small\begin{equation*}
        \begin{split}
            &c_k=\prod_{j=1,j\neq k}^q\left[\frac{\Omega_j+\phi_j^{-1}\mu_j^2}{\Omega_j+\beta_j^2\phi_j^{-1}}\right]^{\frac{\lambda_j-\frac{1}{2}}{2}}\frac{K_{\lambda_{j}-\frac{1}{2}}\left(\sqrt{\left[\Omega_j+\beta_j^2\phi_j^{-1}\right]\left[\Omega_j+\phi_j^{-1}\mu_j^2\right]}\right)}{(2\pi)^{\frac{1}{2}}\phi_j^{\frac{1}{2}}K_{\lambda_j}(\Omega_j)\exp\{{\mu_j\beta_j}/{\phi_j}\}}\\& \ \times\left[\frac{\omega_0+b^{-1}\sum_{d=q+1}^p\mu_d^2}{\omega_0+b^{-1}\sum_{d=q+1}^p\beta_d^2}\right]^{\frac{(\lambda_0-\frac{p-q}{2})}{2}}\frac{K_{\lambda_0-\frac{p-q}{2}}\left(\sqrt{\left[\omega_0+b^{-1}\sum_{d=q+1}^p\beta_d^2\right]\left[\omega_0+b^{-1}\sum_{d=q+1}^p\mu_d^2\right]}\right)}{(2\pi)^{\frac{p-q}{2}}b^{\frac{p-q}{2}}K_{\lambda_0}(\omega_0)\exp\left\{\frac{1}{b}\sum_{d=q+1}^{p}\mu_d\beta_d\right\}}.
        \end{split}
    \end{equation*}}
    Therefore, the density
    {\begin{equation*}
        f(t|\bs \theta)\propto \left[\frac{\Omega_k+\phi_k^{-1}([t-\mu_k)^2}{\Omega_k+\beta_k^2\phi_k^{-1}}\right]^{\frac{\lambda_k-\frac{1}{2}}{2}}\frac{K_{\lambda_{k}-\frac{1}{2}}\left(\sqrt{\left[\Omega_k+\beta_k^2\phi_k^{-1}\right]\left[\Omega_k+\phi_k^{-1}\left(t-\mu_k\right)^2\right]}\right)}{(2\pi)^{\frac{1}{2}}\phi_k^{\frac{1}{2}}K_{\lambda_k}(\Omega_k)\exp\{-{(t-\mu_k)\beta_k}/{\phi_k}\}}.
    \end{equation*}}As \citet{browne15} note, if two (or more) parameterizations are one-to-one and one parameterization is identifiable, then the other is also identifiable. Set $\delta_j=\beta_j/\phi_j$, $\alpha_j=\sqrt{\Omega_j/\phi_j+\beta_j^2/\phi_j^2}$ and $\kappa_j=\sqrt{\phi_j\Omega_j}$. For large $z$, the Bessel function can approximated by 
\begin{equation*}
K_{\lambda(z)}=\sqrt{\frac{\pi}{2z}}e^{-z}\left[1+O\left(\frac{1}{z}\right)\right],
\end{equation*}
and the characteristic function for a normal variance-mean density can be written as 
$\varphi_X(t)=\exp\{it\mu\}M_W\left(\beta ti-\sigma^2t^2/2~|~\lambda, \Omega\right)$.
Therefore, the characteristic function for the JGHD can be written
{\small\begin{equation*}
\begin{split}
&\varphi_{\vecX}(\vecv)=\prod_{j=1}^q\exp\{i|\bs \Gamma' \vecv|_j \mu_j\}\left[1+\frac{\phi_j|\bs \Gamma' \vecv|_j^2-2\beta_j|\bs \Gamma' \vecv|_ji}{\Omega_j}\right]^{-\lambda_j/2}\frac{K_{\lambda_j}\left(\sqrt{\Omega_j\left[\Omega_j+(\phi_j|\bs \Gamma' \vecv|_j^2-2\beta_j|\bs \Gamma' \vecv|_ji)\right]}\right)}{K_{\lambda_j}(\Omega_j)}\\
&\quad \times\exp\{i|\bs \Gamma' \vecv|'_2\bs \mu_2\}\left[1+\frac{b|\bs \Gamma' \vecv|'_2|\bs \Gamma' \vecv|_2-2\bs \beta'_2|\bs \Gamma' \vecv|i}{\omega_0}\right]^{-\lambda_{0}/2}\frac{K_{\lambda_0}\left(\sqrt{\omega_0\left[\omega_0+(b|\bs \Gamma' \vecv|'_2|\bs \Gamma' \vecv|_2-2\bs \beta_2'|\bs \Gamma' \vecv|_2i)\right]}\right)}{K_{\lambda_0}(\omega_0)},
\end{split}
\end{equation*}}
where $|\bs \Gamma' \vecv|_2$ is the $(q+1)$th to $p$th columns of $|\bs \Gamma' \vecv|$, $\bs \mu_2=(\mu_{q+1}, \ldots, \mu_p)'$, and $\bs \beta_2=(\beta_{q+1}, \ldots, \beta_p)'$. Now if we consider moving $t$ in the direction $\bs z$, $\vecv=t\bs z$ and, for large $t$, the characteristic function is 
{\small\begin{equation*}
\begin{split}
\varphi_{\vecX}(t \bs z)&\propto\exp\Bigg\{it\sum_{j=1}^p|\bs \Gamma' \bs z|_j \mu_j-t\sum_{j=1}^q\kappa_j||\bs \Gamma' \bs z|_j|-t \sqrt{b\omega_0}\sum_{k=q+1}^p|\bs \Gamma' \bs z|_k\\&
\qquad\qquad\qquad\qquad\qquad\qquad\qquad\qquad\qquad\qquad-\log(t)\Bigg[\sum_{j=1}^q \lambda_j I (|\bs \Gamma' \bs z|_j \neq 0)+\lambda_0\Bigg]+O(1)\Bigg\}\\
&\propto\exp\Bigg\{it\bs z '\bs \Gamma \bs \mu  -t\left[\sum_{j=1}^q\kappa_j||\bs \Gamma' \bs z|_j|+\sqrt{b\omega_0}\sum_{k=q+1}^p|\bs \Gamma' \bs z|_k\right]\\&
\qquad\qquad\qquad\qquad\qquad\qquad\qquad\qquad\qquad\qquad -\log (t) \Bigg[\sum_{j=1}^q \lambda_j I (|\bs \Gamma' \bs z|_j \neq 0)+\lambda_0\Bigg]+O(1)\Bigg\}.
\end{split}
\end{equation*}}
From \cite{yakowitz68}, there exists $\bs z$ such that the tuple $$\left(\bs z '\bs \Gamma \bs \mu, \sum_{j=1}^q\kappa_j||\bs \Gamma' \bs z|_j|+\sqrt{b\omega_0}\sum_{k=q+1}^p|\bs \Gamma' \bs z|_k, \sum_{j=1}^q \lambda_j I (|\bs \Gamma' \bs z|_j \neq 0)+\lambda_0\right)$$ is pairwise distinct for all $g=1,\ldots, G$ and reduce to a mixture of univariate hyperbolic distributions, which is identifiable \citep{browne15}.
\end{proof}

\subsection{Computational Aspects}
We start with ten random initializations of the algorithm by randomly assigning each observation to one of the $G$ components. After fitting models for all values of $G$ and $q_g$, we use the BIC. We compare our approach with the classic Gaussian parsimonious clustering models (GPCM) from {\sf R} package \texttt{mixture} \citep{browne15b} and high-dimensional data clustering (HDDC) approach from {\sf R} package \texttt{HDclassif} \citep{berge12}. It is worth noting that the MJGHD proposed herein does not need to numerically invert covariance matrices, which often fails for singularity reasons. 
We compare with the parsimonious Gaussian mixture models from {\sf R} package \texttt{pgmm} \citep{mcnicholas11} and the mixture of generalized hyperbolic factor analyzers \citep[MGHFA;][]{tortora16} from {\sf R} package \texttt{mixGHD} \citep{tortora17} in our real data applications.

\section{Real Data}\label{sec:App.}
\subsection{Italian Wines}
The Italian wines data \citep{forina86} has been widely used in literature. The data set includes 27 physical and chemical properties of 178 wines and each wine belongs to one of the three types: Barolo, Grignolino or Barbera. This data set is available from the {\sf R} package \texttt{pgmm}. The MJGHD approach was fitted to these data for $G=1,2,\ldots, 4$ and $q_g= 2,3,5,8,10$. The highest BIC occurs at the three-component, $\vecq=(8,5,3)$ model. The BIC value is $-16984$. 
A summary of the best models from the MJGHD, HDDC, GPCM, PGMM, and the MGHFA approaches is shown in Table~\ref{table:4.2}. The MJGHD and the PGMM approaches yield excellent clustering results and outperform the chosen Gaussian mixture models, HDDC, and MGHFA approaches. The MJGHD misclassifies only three of the 178 wines (Table~\ref{table:4.3}). It is worth noting that MJGHD is one of the few methods in the literature that uses all 27 variables of the wine data set and yields excellent clustering results. 
\begin{table}[ht]
\centering
\vspace{-0.1cm}
\caption{A comparison of the selected MJGHD and four different approaches on the wine data.}
\vspace{-0.12in}
\scalebox{0.85}{
\begin{tabular*}{1\textwidth}{@{\extracolsep{\fill}}l r r r r}
\hline
&$G$&Model&BIC&ARI\\ [0.5ex] 
\hline
MJGHD&$3$&$\vecq=(8,5,3)$&$-16984$&$0.95$\\
HDDC&$2$&$\vecq=(2,1)$&$-12657$&$0.41$\\
PGMM&$3$&CUU, $q=4$&$-11428$&$0.96$\\
GPCM&$2$&EVE&$-12068$&$0.49$\\
MGHFA&$2$&$q=2$&$-12653$&$0.49$\\
\hline
\end{tabular*}}
\label{table:4.2}
\end{table}
\begin{table}[!htp]
\vspace{-0.3cm}	
\centering
\caption{Cross-tabulation of true versus predicted (A,B,C) classifications from the selected MJGHD for the wine data.}
\vspace{-0.12in}
\scalebox{0.85}{
\begin{tabular*}{1\textwidth}{@{\extracolsep{\fill}} lcccr}
\hline
&A&B&C&ARI\\
\hline
Barolo&59&0&0&\multirow{3}{*}{0.95}\\
Grignolino&2&68&1\\
Barbera&0&0&48\\
\hline
\end{tabular*}}
\label{table:4.3}
\end{table}

\subsection{Breast Cancer Diagnostic Data Set}
The breast cancer diagnostic data was originally reported on by \citet{street93}. They give data on 569 cases of breast tumours --- 357 benign and 212 malignant --- and ten real-valued features are computed for each cell nucleus. The mean, standard error, and the ``worst'' or the largest of these features were computed for each image, resulting in 30 attributes. For instance, Attribute~3 is mean radius, Attribute~13 is the standard error of radius and Attribute~23 is the worst radius. The MJGHD approach is fitted to these data for $G=1,2,\ldots, 4$ and $q_g=2,3,5,8,10$. The highest BIC occurs at the two-component, $\vecq=(8,5)$ model. The BIC value is $-20432$. 
%
A summary of the best models from the MJGHD, HDDC, PGMM, GPCM, and MGHFA approaches is shown in Table~\ref{table:4.5}. The respective classification results reveal that the chosen two-component MJGHD model yields relatively good clustering result (ARI$=0.70$) and outperforms the approaches we compared with. Moreover, the MJGHD approach is the only one that gives the correct number of components. Plots of the first three dimensions of the transformed spaces for each component with group labels (Figure~\ref{fig:4.1}) indicate that 
the components are well separated in the latent space.
\begin{table}[htb]
\centering
\vspace{-0.1cm}
\caption{A comparison of the selected MJGHD and four different approaches on the tumour data.}
\vspace{-0.12in}
\scalebox{0.85}{
\begin{tabular*}{1\textwidth}{@{\extracolsep{\fill}}lrrrr}
\hline
&$G$&Model&BIC&ARI\\ [0.5ex] 
\hline
MJGHD&$2$&$\vecq=(8,5)$&$-20432$&$0.70$\\
HDDC&$4$&$\vecq=(4,3,3,3)$&$-26673$&$0.09$\\
PGMM&$4$& UUU, $q=4$& $-12083$&$0.35$\\
GPCM&$4$&VEE&$-24367$&$0.22$\\
MGHFA&$5$&$q=4$&$-13777$&$0.58$\\
\hline
\end{tabular*}}
\label{table:4.5}
\end{table}
\begin{table}[htb]
\centering
\vspace{-0.1cm}
\caption{Cross-tabulation of true versus predicted (A,B) classifications from the selected MJGHD for the tumour data.}
\vspace{-0.12in}
\scalebox{0.85}{
\begin{tabular*}{1\textwidth}{@{\extracolsep{\fill}} lccr}
\hline
&A&B&ARI\\
\hline
Malignant&343&15&\multirow{2}{*}{0.70}\\
Benign&30&164&\\
\hline
\end{tabular*}}
\label{table:4.6}
\end{table}

\begin{figure}[!htb]
        \centering
        \vspace{-1cm}
        \begin{subfigure}[b]{0.45\textwidth}
                \includegraphics[width=\textwidth]{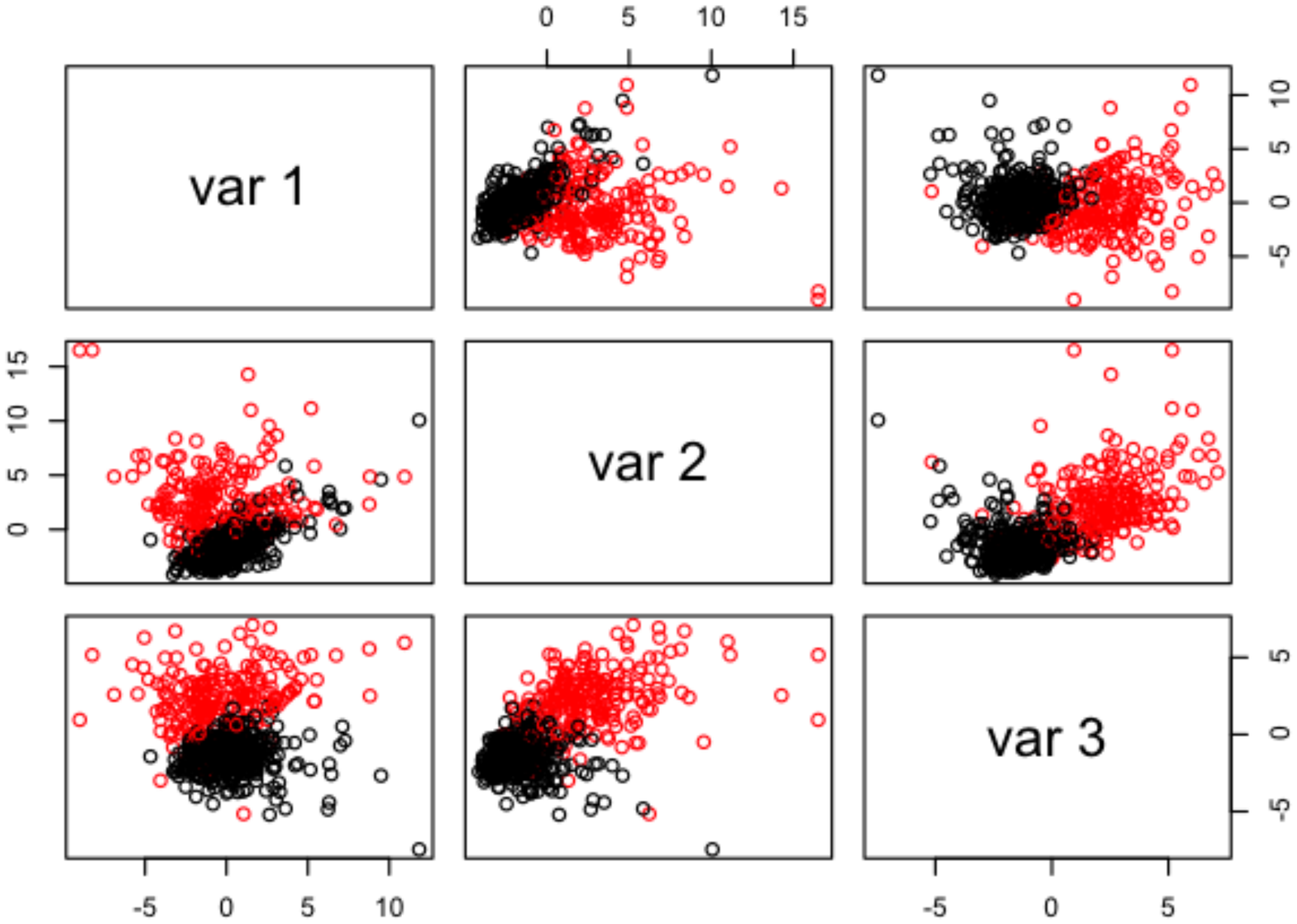}
               \end{subfigure}%
  ~      \begin{subfigure}[b]{0.45\textwidth}
                \includegraphics[width=\textwidth]{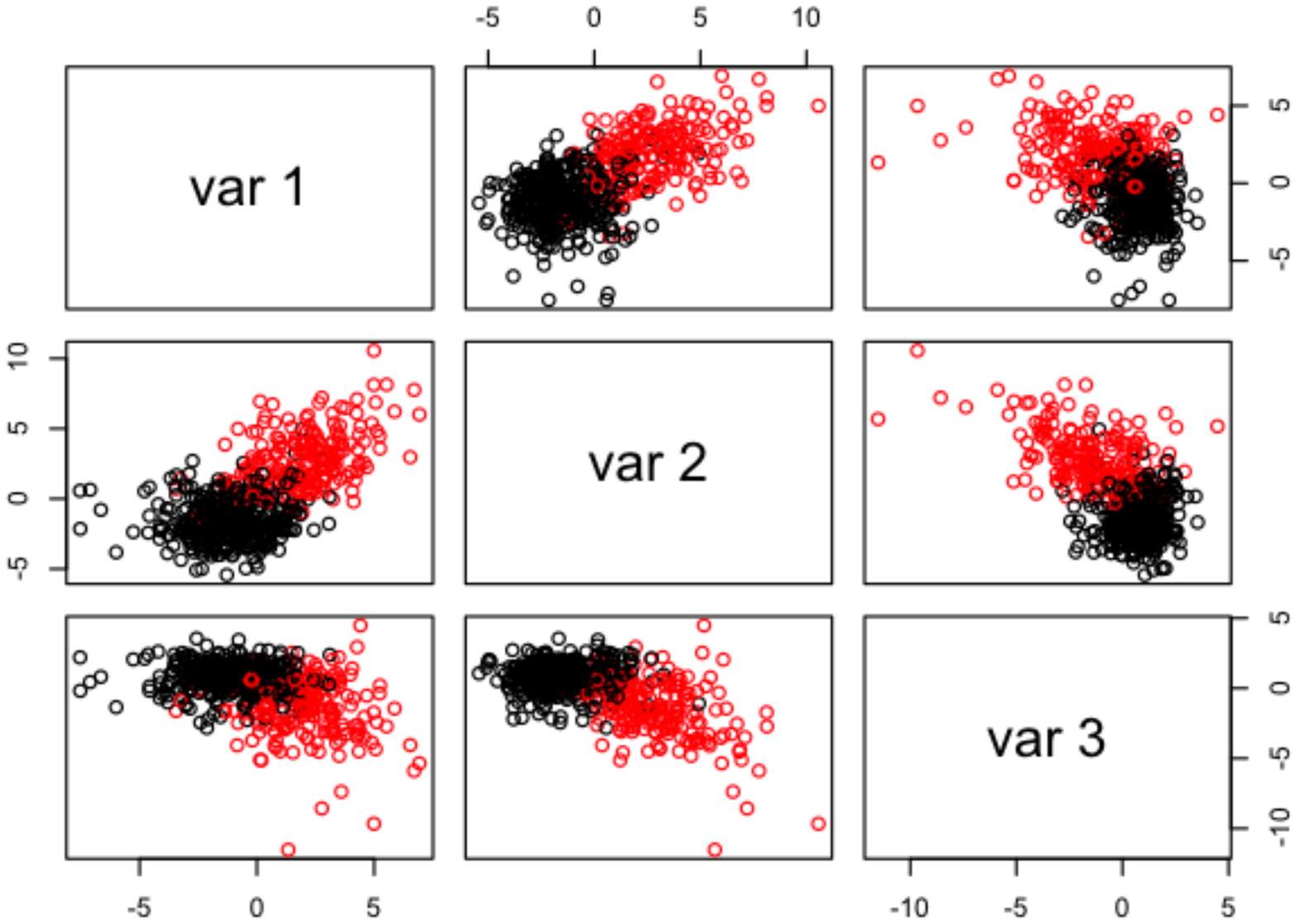}
               \end{subfigure}
                \vspace{-1cm}
                \caption{The first three dimensions of the transformed spaces for components 1 (left) and~2 (right) from the selected MJGHD model.}
\label{fig:4.1}
\end{figure}

\section{Conclusion}\label{sec:Con.}
The MJGHD approach for asymmetric clustering of high-dimensional data was introduced. We developed the MJGHD based on a mixture of GHDs, which represents perhaps the most flexible in a recent series of alternatives to the Gaussian mixture model for clustering and classification. For one, allowing the dimension of the component-specific subspace to vary across components provides quite some flexibility. Parameter estimation was carried out using a multi-cycle ECM algorithm and, notably, the MJGHD approach does not require numerical inversion of covariance matrices. The BIC was used for model selection. 
Comparing the MJGHD, HDDC, MGHFA, PGMM, and GPCM approaches yielded some interesting results. Two real data sets were considered for illustration: the Italian wine data and the breast cancer diagnostic data. The MJGHD approach was the only approach that performed well, in terms of classification performance, in both cases. The PGMM approach gave excellent classification results for the Italian wine data but performed poorly when fitted to the breast cancer diagnostic data. Furthermore, the MJGHD approach gave superior classification performance in both cases when compared to the chosen HDDC, MGHFA, and GPCM models.
Although illustrated for clustering, the MJGHD approach can also be applied for semi-supervised classification and discriminant analysis.
In our future work, we will investigate the possibility to avoid calculating the full $\bs \Gamma$ matrix which would greatly reduce the runtime. More efficient code and parallel implementation may also be used for this purpose. 

\section*{Acknowledgements}
This work was partly supported by the Canada Research Chairs program.


\end{document}